\newcommand{\dotex}{\frac{d}{dt}}
\newcommand{\tr}[1]{\text{Tr}\left(#1\right)}
\newcommand{\ket}[1]{\ensuremath{|#1\rangle}\xspace}
\newcommand{\bra}[1]{\ensuremath{\langle #1|}\xspace}
\newcommand{\braket}[2]{\langle #1|#2 \rangle}
\newcommand{\ba}{\boldsymbol{a}}
\newcommand{\bb}{\boldsymbol{b}}
\newcommand{\bH}{\boldsymbol{H}}
\newcommand{\bN}{{\boldsymbol{N}}}
\newcommand{\fL}{\mathfrak{L}}
\newcommand{\Dp}{\mathfrak{D}_{\alpha}}
\newcommand{\Dm}{\mathfrak{D}_{\alpha}^\dagger}
\newcommand{\rkz}{K_0^s}
\newtheorem{lemma}{Lemma}
\title{\LARGE \bf
Adiabatic elimination for open quantum systems with effective Lindblad master equations \thanks{ This work was partially supported by the Projet Blanc ANR-2011-BS01-017-01 EMAQS.}
}
\author{ R. Azouit\thanks{Centre Automatique et  Syst\`{e}mes, Mines-ParisTech, PSL Research University.
60 Bd Saint-Michel, 75006 Paris, France. },  
A. Sarlette\thanks{ INRIA Paris, 2 rue Simone Iff, 75012 Paris, France; and Ghent University / Data Science Lab, Technologiepark 914, 9052 Zwijnaarde, Belgium.},   P. Rouchon\footnotemark[2]
}
\begin{document}

\maketitle
\thispagestyle{empty}
\pagestyle{empty}

%%%%%%%%%%%%%%%%%%%%%%%%%%%%%%%%%%%%%%%%%%%%%%%%%%%%%%%%%%%%%%%%%%%%%%%%%%%%%%%%
\begin{abstract}

We consider an open quantum system described by a  Lindblad-type master equation with two times-scales. The fast time-scale is strongly dissipative   and  drives the system towards a low-dimensional decoherence-free space. To perform the  adiabatic elimination of this fast relaxation, we propose  a geometric asymptotic expansion based on  the  small positive parameter describing the time-scale  separation. This expansion exploits   geometric singular perturbation theory and center-manifold techniques.   We conjecture that, at any order,  it provides  an effective  slow Lindblad master equation and  a  completely positive parameterization of the slow invariant sub-manifold associated to the low-dimensional decoherence-free space.  By preserving  complete positivity and trace, two important structural properties attached to   open quantum dynamics, we obtain a reduced-order model  that directly conveys a physical interpretation since it relies  on  effective Lindbladian  descriptions of the slow evolution. At  the first order, we derive simple formulae for the effective Lindblad master equation.   For a specific type  of fast dissipation, we show how any  Hamiltonian perturbation  yields  Lindbladian second-order corrections to the first-order slow evolution governed by the  Zeno-Hamiltonian. These results are illustrated on a  composite system made of a strongly dissipative harmonic oscillator, the ancilla, weakly coupled to another quantum system.

\end{abstract}

%%%%%%%%%%%%%%%%%%%%%%%%%%%%%%%%%%%%%%%%%%%%%%%%%%%%%%%%%%%%%%%%%%%%%%%%%%%%%%%%
\section{Introduction}

Solving the equation of evolution for a open quantum system - the Lindblad master equation \cite{BreuerPetruccioneBook} - is generally tedious.
To gain better physical insight and/or for numerical simulations, it is of wide interest to compute rigorous reduced models of quantum dynamical systems. In a typical case, a system of interest is coupled to an ancillary system expressing a measurement device or a perturbing environment \cite{haroche-raimondBook06}. The quantum dynamics describes the joint evolution of both systems and in order to focus only on the system of interest we want to determine a dynamical equation for the system of interest only, from which we have ``eliminated'' the ancillary system.

%C2: adiabatic elimination: from Hamilton to Lindblad, history
A standard tool for model reduction is to use the different timescales of the complete system to separate the quantum dynamics into fast and slow variables and then eliminate the fast ones. This technique is known as adiabatic elimination. In quantum Hamiltonian systems, regular perturbation theory can be easily applied as the propagator remains unitary, and the construction of the reduced model to various orders of approximation is standard \cite{sakurai2011modern}. In contrast, for open quantum systems, described by a Lindblad master equation \cite{BreuerPetruccioneBook}, the case is much more complicated and involves singular perturbation theory.
Several particular examples have been treated separately. In \cite{BrionJPA07} different methods are proposed to perform an adiabatic elimination up to second-order on a lambda system. In \cite{mirrahimi-rouchonieee09} and \cite{ReiteS2012PRA} the problem of excited states decaying towards $n$ grounds state is treated. A specific atom-optics dynamics is investigated in \cite{AtkinsPRA03}. In the presence of continuous measurement, \cite{CernotikPRA15} presents a method of adiabatic elimination for systems with Gaussian dynamics.

%C3: general theory, introduce structure
However, no general and systematic method has been developed yet for adiabatic elimination in systems with Lindblad dynamics. More precisely, treating the Lindblad master equation as a usual linear system, or applying the Schrieffer-Wolff formalism which is generalized in \cite{Kessl2012PRA} to Lindblad dynamics, requires the inversion of super-operators which can be troublesome both numerically and towards physical interpretation. In \cite{AzouitCDC15} we made a first attempt to circumvent this inversion, using invariants of the dynamics. This provides a first-order expansion only, and in general linear form --- i.e.~not necessarily with the structure of a Lindblad equation.

% C: our contributions
In the present paper, we propose a geometric  method to perform an adiabatic elimination for open quantum systems, whose key feature is that \emph{the resulting reduced model is explicitly described by an effective  Lindblad equation. The reduced system is parameterized by a reduced density operator and the mapping from the reduced model to the initial system state space is expressed in terms of Kraus operators, ensuring a trace-preserving completely positive map.} By preserving these structural properties of open quantum dynamics, we obtain a reduced model that directly conveys a physical interpretation. As far as we know, combining asymptotic expansion with  completely positive map and Lindbladian formulation  has  never been addressed before.  This work  is a first attempt to investigate  the interest of such combination with  lemmas~\ref{lem:Ls1}, \ref{lem:K1} and~\ref{lem:Fs2} underlying the conjecture illustrated on figure~\ref{fig:GeomAdiabElimLindblad}.

Our method applies to general open quantum systems with two timescales, described by two general Lindbladian super-operators \eqref{eq:DynEpsilon}, and where the fast Lindbladian makes the system converge to a decoherence-free subspace of the overall Hilbert space. We then use a geometric approach based on center manifold techniques \cite{carr-book} and geometric singular perturbation theory \cite{Fenichel79} to obtain an expansion of the effect of the perturbation introduced by the slow Lindbladian on this decoherence-free subspace. For general Lindbladians satisfying this setting, we get explicit formulas for the Lindblad operators describing the first-order expansion. In the particular case of a Hamiltonian perturbation, we retrieve the well known Zeno effect. Furthermore, for a fast Lindbladian described by a single decoherence operator and subject to a Hamiltonian perturbation, we derive explicit formulas for the first-order effect on the location of the center manifold and for Lindblad operators describing the second-order expansion of the dynamics. This allows to highlight how a first-order Zeno effect is associated to second-order decoherence.

We apply our method to a quantum system coupled to a highly dissipative quantum harmonic oscillator (ancilla). Our general formulas directly provide an effective  Lindblad master equation of the reduced model where this ancilla is eliminated. The result for this example is well known \cite{CarmichaelBook07}, which allows us to emphasize how the correct results are obtained also on infinite-dimensional systems, and to appreciate the computational simplicity of our method in comparison with previous ones.

The paper is organized as follows. Section \ref{sec:system} presents the structure of two-timescales master equation for open quantum systems, as well as the assumptions and properties of the unperturbed system used for deriving our results. In section \ref{sec:firstorder} we present a geometric approach for performing the adiabatic elimination and derive a first-order reduced model for arbitrary perturbations. In section \ref{sec:secondorder} we develop the second-order expansion for a class of systems. In section \ref{sec:example} we illustrate the method where the ancilla  is a highly dissipative harmonic oscillator.

\section{A class of perturbed master equations}\label{sec:system}

Denote by $\mathcal{H}$ a Hilbert space of finite dimension, by $\mathcal{D}$ the compact convex set of density operators  $\rho$ on $\mathcal{H}$ ($\rho$ is Hermitian, nonnegative and trace one). We consider a two-time scale dynamics on $\mathcal{D}$ described by the following master differential equation
\begin{equation}\label{eq:DynEpsilon}
 \dotex \rho = \fL_0(\rho) + \epsilon \fL_1(\rho)
\end{equation}
where $\epsilon$ is a small positive parameter and the linear super-operators $\fL_0$ and $\fL_1$ are of Lindbladian forms~\cite{BreuerPetruccioneBook}. That is, there exist  two finite families  of operators on $\mathcal{H}$, denoted by $(L_{0,\nu})$ and $(L_{1,\nu})$,  and two Hermitian operators $H_0$ and $H_1$ (called Hamiltonians) such that, for $r=0,1$, we have
\begin{equation}\label{eq:GenLindblad}
\fL_r(\rho)= -i[H_r,\rho] + \sum_{\nu} L_{r,\nu}\rho L_{r,\nu}^\dag - \tfrac{1}{2} L_{r,\nu}^\dag   L_{r,\nu}\rho -\tfrac{1}{2} \rho L_{r,\nu}^\dag   L_{s,\nu}
.
\end{equation}
It is well known that if the initial condition $\rho(0)$ belongs to $\mathcal{D}$, then the solution $\rho(t)$ of~\eqref{eq:DynEpsilon} remains in $\mathcal{D}$ and is defined for all $t\geq 0$. It is also a known fact that the flow of~\eqref{eq:DynEpsilon} is a contraction for many distances, such as the one derived from the nuclear norm $\|\cdot\|_1$: for any trajectories $\rho_1$ and $\rho_2$ of~\eqref{eq:DynEpsilon}, we have $\| \rho_1(t) -\rho_2(t) \|_1 \leq \| \rho_1(t') -\rho_2(t') \|_1$ for all $t\geq t'$; see e.g.~\cite[Th.9.2]{nielsen-chang-book}.

 We assume that,  for $\epsilon=0$, the unperturbed master equation  $\dotex \rho =\fL_0(\rho)$ converges to a  stationary regime. More precisely, we assume that the unperturbed master equation admits  a sub-manifold of stationary operators  coinciding with the $\Omega$ limit set of  its trajectories. Denote by
$$
\mathcal{D}_0=\left\{\rho\in\mathcal{D}~\big|~\fL_0(\rho)=0\right\}
$$
this stationary manifold:  it  is  compact and convex. We thus  assume that for all $\rho_0\in\mathcal{D}$, the solution of $\dotex \rho=\fL_0(\rho)$ with $\rho(0)=\rho_0$ converges for $t$ tending to $+\infty$ towards an element of $\mathcal{D}_0$ denoted by
$R(\rho_0)= \lim_{t\rightarrow +\infty} \rho(t)$.
Since for any $t\geq 0$, the propagator $ e^{t\fL_0}$ is a completely  positive linear map~\cite[Chap.8]{nielsen-chang-book}, $R$ is also a completely positive map. By Choi's theorem~\cite{ChoiLAA1975} there exists a finite set of operators on $\mathcal{H}$ denoted by $(M_\mu)$ such that
\begin{equation}\label{eq:Rkraus}
   R(\rho_0)= \sum_{\mu} M_\mu \rho_0 M^\dag_\mu
\end{equation}
with $\sum_{\mu} M_\mu^\dag M_\mu = I$,  the identity operator on $\mathcal{H}$. The form \eqref{eq:Rkraus}
is called a Kraus map.
We thus assume that
$$
\mathcal{D}_0= \left\{ R(\rho)~\big|~\rho \in\mathcal{D} \right\}\text{ and }
\forall \rho\in\mathcal{D}_0, ~R(\rho)=\rho.
$$

An invariant operator  attached to the dynamics  $\dotex \rho = \fL_0(\rho)$ is an Hermitian operator $J$ such that for any time $t \geq 0$ and any initial state $\rho_0=\rho(0)$, we have
$\tr{J \rho(t)} = \tr{J \rho_0}$. Such invariant operators $J$ are characterized by the fact that $\fL_0^*(J)=0$ where the adjoint map to $\fL_0$ is given by
$$
\fL^*_0(A)= i[H_0,A] +\sum_{\nu} L_{0,\nu}^\dag A L_{0,\nu} - \tfrac{1}{2} L_{0,\nu}^\dag   L_{0,\nu}A -\tfrac{1}{2} A L_{0,\nu}^\dag   L_{0,\nu}
$$
for any Hermitian operator $A$.

Thus by taking the limit for $t$ tending to $+\infty$ in $\tr{J \rho(t)} = \tr{J \rho_0}$,  we have,  for all Hermitian operators $\rho_0$,
$\tr{J R(\rho_0)} = \tr{J \rho_0}$. Denote by $R^*$ the adjoint map associated to $R$:
\begin{equation}\label{eq:Radjoint}
  R^*(A)= \sum_{\mu} M_{\mu}^\dag A M_\mu
\end{equation}
for any Hermitian operator $A$ on $\mathcal{H}$. Then, $\tr{R^*(J)\rho_0 } = \tr{J \rho_0}$ for any  Hermitian operator $\rho_0$ is equivalent to
$R^*(J)= J $. I.e.~invariant operators $J$ are characterized by $\fL_0^*(J)=0$ and  satisfy  $R^*(J)=J$.

We assume additionally that $\mathcal{D}_0$ coincides with the set of density operators  with support in $\mathcal{H}_0$, a subspace of $\mathcal{H}$. In other words the unperturbed master equation features a decoherence-free space $\mathcal{H}_0$.
Denote by $P_0$ the operator on $\mathcal{H}$ corresponding to orthogonal projection onto $\mathcal{H}_0$. Consequently,  for any Hermitian operator $\rho$, we have $P_0 R(\rho)=R(\rho) P_0=R(\rho)$. Thus $\tr{R^*(P_0) \rho}=\tr{R(\rho)}=\tr{\rho}$ for all $\rho$ which implies:
\begin{equation}\label{eq:RP0}
R^{*}(P_0)=I
.
\end{equation}
Moreover, for any  vector   $\ket{c}$ in $\mathcal{H}_0$,   $R(\ket{c}\bra{c})= \ket{c}\bra{c}$. This implies that, for the Kraus map~\eqref{eq:Rkraus}, there exists a family of complex numbers $\lambda_\mu$ such that  $\sum_{\mu} |\lambda_\mu|^2=1$ and
\begin{equation}\label{eq:MmuC}
  \forall \ket{c}\in\mathcal{H}_0,~M_\mu \ket{c} = \lambda_\mu \ket{c}
  .
\end{equation}

%%%  First order  %%%

 \section{First order expansion for arbitrary perturbations} \label{sec:firstorder}

We consider here the perturbed master equation~\eqref{eq:DynEpsilon} whose unperturbed part $\dotex \rho = \fL_0(\rho)$ satisfies  the assumptions of Section \ref{sec:system}:   any trajectory converges to a steady-state; the set of steady-states $\mathcal{D}_0$ coincides with the set of density operators with support on a subspace $\mathcal{H}_0$ of $\mathcal{H}$. This section develops a first-order expansion versus $\epsilon$ of~\eqref{eq:DynEpsilon} around $\mathcal{D}_0$.

Denote by $\mathcal{H}_s$ (subscript $s$ for slow), an abstract Hilbert space with the same dimension as
$\mathcal{H}_0$. Denote by $\mathcal{D}_s$ the set of density operators on $\mathcal{H}_s$. Denote by $\{\ket{\nu}\}$ (resp. $\{\ket{c_\nu}\}$) a Hilbert basis of $\mathcal{H}_s$ (resp. $\mathcal{H}_0$). Consider the Kraus map $K_0$ defined by
\begin{equation}\label{eq:K0}
  \forall \rho_s\in\mathcal{D}_s,~K_0(\rho_s) = S_0 \rho_s S_0^\dag \in\mathcal{D}
\end{equation}
 where $S_0=\sum_\nu \ket{c_\nu} \bra{\nu}$.
 We have $S_0 S_0^\dag =P_0$, the orthogonal projector onto $\mathcal{H}_0$ and $S_0^\dag S_0=I_s$, the identity operator on $\mathcal{H}_s$.

\begin{figure}[h]
  \centering
  % Requires \usepackage{graphicx}
  \includegraphics[width=0.45\textwidth]{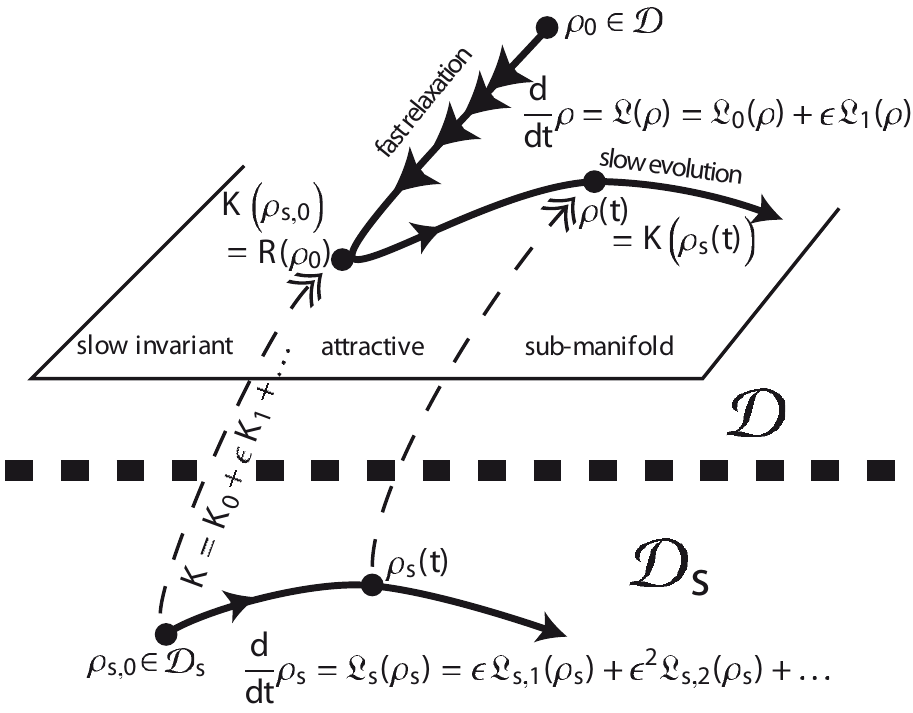}\\
  \caption{\small Adiabatic elimination, based on geometric singular perturbation theory, of the  fast relaxation dynamics $\dotex \rho = \fL_0(\rho)$  for an open-quantum system governed by  the Lindbladian master equation  $\dotex \rho = \fL(\rho)= \fL_0(\rho) + \epsilon \fL_1(\rho)$ where $\epsilon$ is a small positive parameter. It provides  two asymptotic expansions  of the  slow dynamics. The  parameterization  via density operators $\rho_s$ of the slow invariant attractive sub-manifold  (related to  $\fL_0(\rho)=0$)  is based on  the   map $\rho=K(\rho_s)=K_0(\rho_s)+\epsilon K_1(\rho_s) + \ldots$.  The slow dynamics corresponds to $\dotex \rho_s = \fL_s(\rho_s)= \epsilon \fL_{s,1}(\rho_s)+ \epsilon^2 \fL_{s,2}(\rho_s)+\ldots$. The super-operators $K_0$, $K_1$, \ldots  and  $\fL_{s,1}$,  $\fL_{s,2}$, \ldots are obtained  by identifying  terms of identical order versus $\epsilon$  in the geometric invariance condition\\
  $
    \fL_0\big(K_0+\epsilon K_1\big)+ \epsilon \fL_1\big(K_0+\epsilon K_1+\ldots\big)$
    \\
    $\hspace*{2.em} = K_0\big(\epsilon \fL_{s,1} + \epsilon^2 \fL_{s,2}+\ldots\big)
   + \epsilon K_1 \big(\epsilon \fL_{s,1} + \epsilon^2 \fL_{s,2}+\ldots\big)  + \ldots~.
  $\\
   We conjecture that,  at any order versus $\epsilon$,  the super-operator $K$ is a Kraus map (up to higher-order corrections) and the slow evolution $ \dotex \rho_s=\fL_s(\rho_s)$ is of Lindbladian type. }\label{fig:GeomAdiabElimLindblad}
\end{figure}
As illustrated on figure~\ref{fig:GeomAdiabElimLindblad}, we  are looking for an expansion based on linear super-operators   $\{K_m\}_{m\geq 0}$ between $\mathcal{D}_s$ and $\mathcal{D}$  and  on Lindblad dynamics $\left\{\fL_{s,m}\right\}_{m\geq 0}$ on $\mathcal{D}_{s}$ such that any  solution $t \mapsto \rho_s(t)\in\mathcal{D}_s$ of the reduced Lindblad master equation
 \begin{equation}\label{eq:DynRed}
    \dotex \rho_s = \fL_s(\rho_s)=\sum_{m\geq 0}   \epsilon^m \fL_{s,m}(\rho_s)
 \end{equation}
yields, via the  map
\begin{equation}\label{eq:KrausRed}
   \rho(t)  = K(\rho_s(t))=\sum_{m\geq 0} \epsilon^m K_m(\rho_s(t)) \; ,
\end{equation}
a trajectory of the perturbed system~\eqref{eq:DynEpsilon}.  We combine here  geometric singular perturbation theory~\cite{Fenichel79} with  center manifold techniques  based on Carr  asymptotic expansion lemma~\cite{carr-book} to derive  recurrence relationships  for  $K_m$ and $\fL_{s,m}$. These recurrences  are obtained by identifying  the terms of the same order in the formal invariance condition
$$
\fL_{0}\big(K(\rho_s)\big)+ \epsilon \fL_{1}\big(K(\rho_s)\big) =  \dotex \rho = K\left(\dotex \rho_s\right) = K \big( \fL_s(\rho_s)\big)
.
$$
This means that, for any $\rho_s\in\mathcal{D}_s$, we have
\begin{multline}\label{eq:invariance}
  \fL_{0}\left( \sum_{m\geq 0} \epsilon^m K_m(\rho_s) \right) + \epsilon \fL_{1}\left( \sum_{m\geq 0} \epsilon^m K_m(\rho_s) \right)
  \\
  = \sum_{m} \epsilon^m K_{m}\left(\sum_{m'} \epsilon^{m'} \fL_{s,m'}(\rho_s) \right)
  .
\end{multline}
For $m=0$ we have
\begin{equation}\label{eq:order0}
  \fL_0\big(K_0(\rho_s)\big) = K_0\big(\fL_{s,0}(\rho_s)\big)
  .
\end{equation}
With $K_0$ defined in~\eqref{eq:K0}, we have $\fL_0\big(K_0(\rho_s)\big) \equiv 0$ and thus $\fL_{s,0}(\rho_s)=0$.
Consequently, for $m\geq 1$, we have
\begin{multline}\label{eq:orderm}
\fL_{0}\left(  K_m(\rho_s) \right) + \fL_{1}\left(  K_{m-1}(\rho_s) \right)
\\
=
\sum_{m'=1}^m K_{m-m'}\left( \fL_{s,m'}(\rho_s) \right)
.
 \end{multline}
 When $m=1$, $K_1$ and $\fL_{s,1}$ are defined by
 \begin{equation}\label{eq:order1}
  \fL_0\big(K_1(\rho_s)\big) + \fL_1\big(K_0(\rho_s)\big) = K_0\big(\fL_{s,1}(\rho_s)\big)
  .
\end{equation}
The following lemma proves that the super-operator $\fL_{s,1}(\rho_s)$ defined by this equation is always of Lindblad form.
\begin{lemma} \label{lem:Ls1}

Assume that  $\fL_1(\rho)= - i[H_1,\rho]$  for some Hermitian operator $H_1$ on $\mathcal{H}$. Then, if $\fL_{s,1}$ satisfies~\eqref{eq:order1}, we have
$\fL_{s,1}(\rho_s) = - i [H_{s,1},\rho_s]$ where $H_{s,1}= S_0^\dag H_1 S_0$  is a Hermitian operator on $\mathcal{H}_s$.

Assume that  $\fL_1(\rho)= L_1 \rho L_1^\dag - \tfrac{1}{2} \big(L_1^\dag L_1 \rho +\rho L_1^\dag L_1  \big)$ for some operator $L_1$ on $\mathcal{H}$. Then, if $\fL_{s,1}$ satisfies~\eqref{eq:order1}, we have
\begin{equation}\label{eq:Ls1}
  \fL_{s,1}(\rho_s)= \sum_\mu A_\mu \rho_s  A_\mu^\dag - \tfrac{1}{2} \big(A_\mu^\dag A_\mu \rho_s +\rho_s A_\mu^\dag A_\mu  \big)
\end{equation}
with $A_\mu= S_0^\dag M_\mu L_1 S_0$ and  the Kraus operators $M_\mu$  defined  by~\eqref{eq:Rkraus}.

The result for a general Lindbladian dynamics \eqref{eq:GenLindblad} for $r=1$ follows by linearity.
\end{lemma}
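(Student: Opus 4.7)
The plan is to extract $\fL_{s,1}$ from the invariance relation \eqref{eq:order1} by applying $R$ to both sides. Since $R=\lim_{t\to\infty}e^{t\fL_0}$, differentiating $R\,e^{t\fL_0}=R$ at $t=0$ gives $R\circ\fL_0=0$, annihilating the term $\fL_0(K_1(\rho_s))$ that carries the unknown $K_1$. On the right-hand side, \eqref{eq:MmuC} together with $\sum_\mu|\lambda_\mu|^2=1$ yields $R(S_0 X S_0^\dag)=S_0 X S_0^\dag$ for every operator $X$ on $\mathcal{H}_s$. Composing with $S_0^\dag(\cdot)S_0$ and using $S_0^\dag S_0=I_s$ then produces the master formula
\begin{equation*}
\fL_{s,1}(\rho_s)=S_0^\dag\,R\bigl(\fL_1(S_0\rho_s S_0^\dag)\bigr)\,S_0,
\end{equation*}
to be specialized in each case.

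To do this efficiently, I would first record two structural identities for the Kraus data of $R$. Taking the adjoint of $M_\mu P_0=\lambda_\mu P_0$ and using $S_0^\dag P_0=S_0^\dag$ gives $S_0^\dag M_\mu^\dag S_0=\bar\lambda_\mu I_s$ and, by adjunction, $S_0^\dag M_\mu S_0=\lambda_\mu I_s$. The more substantial identity is
\begin{equation*}
\sum_\mu\bar\lambda_\mu M_\mu=P_0,
\end{equation*}
obtained by decomposing $M_\mu=\lambda_\mu P_0+N_\mu$ with $N_\mu=P_0 M_\mu(I-P_0)$ (the decoherence-free subspace hypothesis forces $M_\mu=P_0 M_\mu$) and projecting $\sum_\mu M_\mu^\dag M_\mu=I$ onto the off-diagonal block $P_0(\cdot)(I-P_0)$, which yields $\sum_\mu\bar\lambda_\mu N_\mu=0$.

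For the Hamiltonian case $\fL_1(\rho)=-i[H_1,\rho]$, expanding the master formula via the Kraus representation of $R$ produces cross-terms in which $\sum_\mu\bar\lambda_\mu M_\mu=P_0$, together with $P_0 S_0=S_0$ and its adjoint, collapses the two pieces of the commutator into $S_0^\dag H_1 S_0\,\rho_s$ and $\rho_s\,S_0^\dag H_1 S_0$, yielding $\fL_{s,1}(\rho_s)=-i[H_{s,1},\rho_s]$ with $H_{s,1}=S_0^\dag H_1 S_0$. For the single-operator dissipative case, set $A_\mu=S_0^\dag M_\mu L_1 S_0$: the jump term is $\sum_\mu A_\mu\rho_s A_\mu^\dag$ by construction; for the anti-commutator terms, $S_0^\dag M_\mu^\dag S_0=\bar\lambda_\mu I_s$ extracts a scalar and $\sum_\mu\bar\lambda_\mu S_0^\dag M_\mu=S_0^\dag$ collapses the rest to $-\tfrac12\bigl(S_0^\dag L_1^\dag L_1 S_0\,\rho_s+\rho_s\,S_0^\dag L_1^\dag L_1 S_0\bigr)$. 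This matches the Lindblad form because $R^*(P_0)=\sum_\mu M_\mu^\dag P_0 M_\mu=I$ forces $\sum_\mu A_\mu^\dag A_\mu=S_0^\dag L_1^\dag L_1 S_0$. The general \eqref{eq:GenLindblad} then follows by linearity.

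The principal obstacle is the identity $\sum_\mu\bar\lambda_\mu N_\mu=0$: without it the Hamiltonian part would leave uncancelled off-diagonal Kraus fragments that prevent closure into a commutator, and the dissipative part would fail to line up with the same $A_\mu$ that appear in the jump term. Together with $R^*(P_0)=I$, this identity encodes the decoherence-free subspace hypothesis and is precisely what guarantees that the first-order reduced generator inherits Lindblad form.
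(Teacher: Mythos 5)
Your proposal is correct and follows essentially the same route as the paper: apply $R$ to \eqref{eq:order1}, use $R\circ\fL_0=0$ and $R\circ K_0=K_0$ to obtain $\fL_{s,1}(\rho_s)=S_0^\dag R\big(\fL_1(S_0\rho_s S_0^\dag)\big)S_0$, then expand $R$ in Kraus form and close the computation using \eqref{eq:MmuC}, $\sum_\mu M_\mu^\dag M_\mu=I$ and $R^*(P_0)=I$. The only real divergence is cosmetic: you package the key cancellation as the (correct) identity $\sum_\mu\bar\lambda_\mu M_\mu=P_0$, proved via the block decomposition $M_\mu=\lambda_\mu P_0+N_\mu$ (which additionally requires $M_\mu=P_0M_\mu$), whereas the paper obtains the same collapse in one line by substituting $\lambda_\mu^* S_0^\dag=S_0^\dag M_\mu^\dag$ and invoking $\sum_\mu M_\mu^\dag M_\mu=I$, so your ``principal obstacle'' is really an equivalent repackaging of that substitution rather than an extra ingredient.
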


\begin{proof}
  Since $R\circ \fL_{0} =  0$  and $R\circ K_0=K_0$ , we have $R\left(\fL_1\big(K_0(\rho_s)\big) \right) =K_0\big(\fL_{s,1}(\rho_s)\big)$. Left multiplication by $S_0^\dag$ and right multiplication by $S_0$ yields
  $$
  \fL_{s,1}(\rho_s) = S^\dag_0 R\left(\fL_1\Big(S_0\rho_sS^\dag_0 \Big) \right)  S_0
  $$
  since $S^\dag_0 S_0=I_s$ is the identity operator on $\mathcal{H}_s$.

For  $\fL_1(\rho)= - i[H_1,\rho]$ we have, exploiting the fact that $M_\mu S_0= \lambda_\mu S_0$  and $S_0^\dag S_0=I_s$:
\begin{multline}\label{eq:Trick}
  S^\dag_0  R\left(\fL_1\Big(S_0\rho_sS^\dag_0 \Big)\right)  S_0
  \\
 = - i \sum _\mu S_0^\dag M_\mu \big( H_1 S_0\rho_sS^\dag_0- S_0\rho_sS^\dag_0 H_1\big) M_\mu^\dag S_0
 \\
 = - i \sum _\mu \left(\lambda_\mu^* S_0^\dag \right)M_\mu  H_1 S_0\rho_s + i \sum_\mu  \rho_s S^\dag_0 H_1M_\mu^\dag  \big(\lambda_\mu S^0\big)
 \\
  = - i \sum _\mu  S_0^\dag  M_\mu^\dag M_\mu  H_1 S_0\rho_s + i \sum_\mu  \rho_s S^\dag_0 H_1M_\mu^\dag  M_\mu  S^0
  \\
  = - i \left[  S_0^\dag  H_1  S_0~,~\rho_s \right]
\end{multline}
since $\sum_\mu M_\mu^\dag M_\mu = I$.  We get the Zeno Hamiltonian  $H_{s,1}= S_0^\dag  H_1  S_0$.

For  $\fL_1(\rho)= L_1 \rho L_1^\dag - \tfrac{1}{2} \big(L_1^\dag L_1 \rho +\rho L_1^\dag L_1  \big)$, similar computations yield
\begin{multline*}
  S^\dag_0  R\left(\fL_1\Big(S_0\rho_sS^\dag_0 \Big)\right)  S_0
  \\ = \sum _\mu S_0^\dag M_\mu L_1 S_0\rho_sS^\dag_0 L_1^\dag  M_\mu^\dag S_0
 \\ - \tfrac{1}{2}\sum _\mu S_0^\dag M_\mu  \big(L_1^\dag L_1 S_0\rho_sS^\dag_0 -  S_0\rho_sS^\dag_0 L_1^\dag L_1 \big) M_\mu^\dag S_0
 \\= \left(\sum _\mu A_\mu \rho_s A_\mu^\dag \right) - \tfrac{1}{2}S_0^\dag L_1^\dag L_1 S_0\rho_s-  \rho_sS^\dag_0 L_1^\dag L_1 S_0
\end{multline*}
with $A_\mu=S_0^\dag M_\mu L_1 S_0$.  It remains to prove that $\sum_\mu A_\mu^\dag A_\mu= S_0^\dag L_1^\dag L_1 S_0$ for showing that we indeed have a Lindblad formulation. This results from the following computations:
\begin{multline*}
  \sum_\mu A_\mu^\dag A_\mu= \sum_\mu S_0^\dag  L_1^\dag M_\mu^\dag S_0S_0^\dagger M_\mu L_1 S_0
  \\= S_0^\dag L_1^\dag R^*(S_0 S_0^\dag) L_1 S_0 = S_0^\dag L_1^\dag L_1 S_0,
\end{multline*}
where we use that $S_0 S_0^\dag =P_0$ and $R^*(P_0)=I$.
\end{proof}

%%%  Second order  %%%

\section{Second order expansion for Hamiltonian perturbations} \label{sec:secondorder}

We assume here that $\fL_0$ is defined by a single operator $L_0$,
$
\fL_0(\rho)= L_0 \rho L_0^\dag - \tfrac{1}{2}\big( L_0^\dag L_0 \rho  + \rho L_0^\dag L_0 \big)
,
$
and that the perturbation $\fL_1$ is Hamiltonian,
$
\fL_1(\rho)= - i |H_1,\rho],
$
where  $H_1$ is a Hermitian operator.  The following  lemma gives  a simple expression for $K_1(\rho_s)$ solution of~\eqref{eq:order1}.
\begin{lemma} \label{lem:K1}
Assume that   $\fL_0(\rho)= L_0 \rho L_0^\dag - \tfrac{1}{2}\big( L_0^\dag L_0 \rho  + \rho L_0^\dag L_0 \big) $ and $\fL_1(\rho)=- i[H_1,\rho]$.    Then
$\fL_{s,1}(\rho_s)= -i \big[S_0^\dag H_1 S_0~,~\rho_s\big]$ and $K_1(\rho_s)=- i\big[C_1, S_0 \rho_s S_0^\dag \big] $
satisfy~\eqref{eq:order1} where  $C_1$ is the Hermitian operator
$$
C_1= 2 (L_0^\dag L_0)^{-1} H_1 P_0 +  2  P_0 H_1 (L_0^\dag L_0)^{-1}
$$
with  $P_0$ the orthogonal projector onto $\mathcal{H_0}$ and  $(L_0^\dag L_0)^{-1}$ standing for the Moore-Penrose pseudo-inverse of the Hermitian operator $L_0^\dag L_0$.
\end{lemma}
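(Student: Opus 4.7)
The first assertion---that $\fL_{s,1}(\rho_s) = -i[S_0^\dag H_1 S_0, \rho_s]$---is an immediate corollary of Lemma~\ref{lem:Ls1} applied to the Hamiltonian perturbation, so the real content is the formula for $K_1$. The plan is to substitute the ansatz $K_1(\rho_s) = -i[C_1, S_0 \rho_s S_0^\dag]$ directly into the order-one invariance relation~\eqref{eq:order1} and verify the equality term by term. A methodological point worth highlighting is that this verification can be carried out \emph{without} ever invoking the Kraus decomposition $(M_\mu)$ of $R$ explicitly; everything will be controlled by the projector $P_0$ onto $\mathcal{H}_0$ and the Moore--Penrose pseudo-inverse of $N := L_0^\dag L_0$.

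The central simplification I would exploit is that, because $\fL_0$ has a single jump $L_0$ and $\mathcal{H}_0$ is the decoherence-free subspace, one has $L_0 P_0 = 0$, possibly after absorbing a scalar multiple into a Hamiltonian gauge. Writing $X := S_0 \rho_s S_0^\dag = P_0 X P_0$, this gives $L_0 X = 0$, $X L_0^\dag = 0$ and $N X = X N = 0$. Hence the ``jump'' contribution $L_0 [C_1, X] L_0^\dag$ in $\fL_0([C_1, X])$ vanishes outright, and only the anticommutator terms survive. Decomposing $C_1 = A + A^\dag$ with $A := 2 N^{-1} H_1 P_0$, the pseudo-inverse identity $N N^{-1} = I - P_0$ together with $N P_0 = 0$ yields $N A = 2(I - P_0) H_1 P_0$, $N A^\dag = 0$, and symmetric statements on the right, so that
\begin{equation*}
\fL_0([C_1, X]) = -(I - P_0) H_1 X + X H_1 (I - P_0).
\end{equation*}

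Combining this with $\fL_1(K_0(\rho_s)) = -i[H_1, X]$, the pieces proportional to $I$ cancel telescopically and leave $\fL_0(K_1(\rho_s)) + \fL_1(K_0(\rho_s)) = -i[P_0 H_1 P_0, X]$. On the other side, using $S_0 S_0^\dag = P_0$ and $S_0^\dag S_0 = I_s$, a direct computation gives $K_0(\fL_{s,1}(\rho_s)) = -i [P_0 H_1 P_0, X]$, so the invariance condition holds. The main obstacle is really just careful bookkeeping with the projectors $P_0$ and $I - P_0$ and tracking on which side of $X$ each factor sits; a more genuinely subtle point is justifying the gauge choice $L_0 P_0 = 0$ and ensuring $\ker L_0 = \mathcal{H}_0$, without which $(L_0^\dag L_0)^{-1}$ could fail to behave as a true inverse on $\mathcal{H}_0^\perp$ and the identity $N N^{-1} = I - P_0$ on which the whole calculation hinges would have to be revisited.
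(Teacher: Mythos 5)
Your proposal is correct and follows essentially the same route as the paper: substitute the ansatz into \eqref{eq:order1}, kill the jump term via $L_0 S_0=0$ and $S_0^\dag L_0^\dag=0$, and reduce the anticommutator terms with $L_0^\dag L_0 P_0=0$ and $L_0^\dag L_0 (L_0^\dag L_0)^{-1}=I-P_0$ to match $-i[P_0H_1P_0-H_1,\,S_0\rho_s S_0^\dag]$. Your closing remark about needing $L_0P_0=0$ (after a gauge shift) and $\ker L_0=\mathcal{H}_0$ is a fair observation; the paper uses these identities without comment, though they do follow from the standing assumption that the stationary set of $\fL_0$ is exactly the set of density operators supported on $\mathcal{H}_0$.
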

The associated first order $\rho_s$-parametrization of the slow invariant attractive manifold,
\begin{multline*}
  K_0(\rho_s)+ \epsilon K_1(\rho_s)
=
\\
\left( I - i\epsilon (L_0^\dag L_0)^{-1} H_1\right) S_0 \rho_s S_0^\dag  \left( I + i\epsilon (L_0^\dag L_0)^{-1} H_1\right)
+ 0(\epsilon^2)
,
\end{multline*}
corresponds, up to second-order terms, to a trace-preserving completely positive map.

\begin{proof}
With  $S_0 \fL_{s,1}(\rho_s) S_0^\dag = -i\big[P_0 H_1 P_0,S_0\rho_s S_0^\dag\big]$, \eqref{eq:order1}  reads
\begin{multline*}
  \fL_0(K_1(\rho_s))=-i\big[P_0 H_1 P_0,S_0\rho_s S_0^\dag\big] + i\big[H_1,S_0\rho_s S_0^\dag\big]
\\
= -i \left[ P_0 H_1 P_0 -H_1~, ~S_0\rho_s S_0^\dag\right]
.
\end{multline*}
With $K_1(\rho_s)= -i\big[C_1, S_0 \rho_s S_0^\dag \big] $  we have  also
\begin{multline*}
 \fL_0(K_1(\rho_s)) = - i L_0 \left[ C_1, S_0\rho_s S_0^\dag\right] L_0^\dag
 \\
 + \tfrac{i}{2}\left( L_0^\dag L_0 \left[ C_1, S_0\rho_s S_0^\dag\right]  +  \left[ C_1, S_0\rho_s S_0^\dag\right]  L_0^\dag L_0 \right)
 .
\end{multline*}
Since $L_0 S_0=0$ and $S_0^\dag L_0^\dag=0$  we have  $$L_0 \left[ C_1, S_0\rho_s S_0^\dag\right] L_0^\dag =0.$$
Since  additionally, $P_0 S_0=S_0$,    $L_0^\dag L_0 P_0=0$ and  $L_0^\dag L_0 (L_0^\dag L_0)^{-1} = I-P_0$, we have
$$
L_0^\dag L_0 \left[ C_1, S_0\rho_s S_0^\dag\right] = 2 (I-P_0) H_1 P_0 S_0 \rho_s S_0^\dag
.
$$
Thus
\begin{multline*}
 \fL_0(K_1(\rho_s)) =   i (I-P_0) H_1 P_0 S_0 \rho_s S_0^\dag -  i S_0 \rho_s S_0^\dag   P_0 H_1 (I-P_0)
  \\
   =
- i \left[ P_0 H_1 P_0 -H_1~, ~S_0\rho_s S_0^\dag\right]
 .
\end{multline*}
\end{proof}

The second order term  $\fL_{s,2}(\rho_s)$ is  solution of~\eqref{eq:orderm} for $m=2$:
$$
  \fL_{0}\left(  K_2(\rho_s) \right) + \fL_{1}\left(  K_{1}(\rho_s) \right)
=
K_{0}\left( \fL_{s,2}(\rho_s) \right) + K_{1}\left( \fL_{s,1}(\rho_s) \right)
.
$$
Using, once again, $R\circ \fL_0\equiv 0$ and $R \circ K_0 = K_0$, we get
\begin{equation}\label{eq:Ls2}
\fL_{s,2}(\rho_s)
=
S_0^\dag  R\Big(\fL_{1}\left(  K_{1}(\rho_s) \right) -  K_{1}\left( \fL_{s,1}(\rho_s) \right)\Big) S_0
.
\end{equation}
The following lemma shows that  $\fL_{s,2}(\rho_s)$ admits a Lindbladian form.

\begin{lemma}\label{lem:Fs2}
 The super-operator $\fL_{s,2}$ defined by~\eqref{eq:Ls2} admits the following Lindbladian formulation
$$
  \fL_{s,2}(\rho_s)= \sum_\mu B_\mu \rho_s  B_\mu^\dag - \tfrac{1}{2} \big(B_\mu^\dag B_\mu \rho_s +\rho_s B_\mu^\dag B_\mu  \big)
$$
with $B_\mu= 2 S_0^\dag M_\mu  L_0 (L_0^\dag L_0)^{-1} H_1 S_0$, $M_\mu$  defined  by~\eqref{eq:Rkraus} and $(L_0^\dag L_0)^{-1}$ standing for  the Moore-Penrose pseudo-inverse of $L_0^\dag L_0$.
\end{lemma}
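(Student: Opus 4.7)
The plan is to start from the defining formula~\eqref{eq:Ls2}, substitute the explicit form of $K_1$ from Lemma~\ref{lem:K1}, and reorganize the result so that one block becomes the jump term $\sum_\mu B_\mu \rho_s B_\mu^\dag$ via $R\circ\fL_0=0$, while the other becomes the anti-commutator. Introduce the shorthand $U=2(L_0^\dag L_0)^{-1}H_1 S_0$, so that $K_1(\rho_s)=-iU\rho_s S_0^\dag+iS_0\rho_s U^\dag$ and $B_\mu=S_0^\dag M_\mu L_0 U$. Two structural identities play a central role: (i) $(L_0^\dag L_0)(L_0^\dag L_0)^{-1}=I-P_0=:Q$ immediately gives $L_0^\dag L_0\,U=2(H_1 S_0-S_0 H_{s,1})=:2W$; (ii) since $R$ maps $\mathcal{D}$ into $\mathcal{D}_0$, each Kraus operator satisfies $P_0 M_\mu=M_\mu$, which combined with $M_\mu P_0=\lambda_\mu P_0$ yields the decomposition $M_\mu=\lambda_\mu P_0+S_0 N_\mu Q$ where $N_\mu:=S_0^\dag M_\mu$.

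Expanding $\fL_1(K_1(\rho_s))$ and $K_1(\fL_{s,1}(\rho_s))$ term by term, using $\fL_{s,1}(\rho_s)=-i[S_0^\dag H_1 S_0,\rho_s]$ together with $S_0 H_{s,1}=P_0 H_1 S_0$, and then grouping by sandwich structure $A\rho_s B^\dag$, one obtains
\begin{equation*}
\fL_1(K_1(\rho_s))-K_1(\fL_{s,1}(\rho_s))=W\rho_s U^\dag+U\rho_s W^\dag-V\rho_s S_0^\dag-S_0\rho_s V^\dag
\end{equation*}
with $V=H_1 U-U H_{s,1}$. Using identity (i) the first two sandwiches combine into $\tfrac{1}{2}\{L_0^\dag L_0,U\rho_s U^\dag\}$; since $R\circ\fL_0=0$ implies $R(L_0 Z L_0^\dag)=\tfrac{1}{2}R(\{L_0^\dag L_0,Z\})$ for any $Z$, applying this with $Z=U\rho_s U^\dag$ and then conjugating by $S_0^\dag$ and $S_0$ turns this block into exactly $\sum_\mu B_\mu\rho_s B_\mu^\dag$.

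For the remaining pair, $S_0^\dag M_\mu^\dag S_0=\bar\lambda_\mu I_s$ gives $S_0^\dag R(V\rho_s S_0^\dag)S_0=(TV)\rho_s$ with $T:=\sum_\mu\bar\lambda_\mu S_0^\dag M_\mu$. The crucial step is the identity $TQ=0$: injecting decomposition (ii) into $\sum_\mu M_\mu^\dag M_\mu=I$ and using $\sum_\mu|\lambda_\mu|^2=1$ together with $\sum_\mu N_\mu^\dag N_\mu=I$, one obtains $S_0(TQ)+(TQ)^\dag S_0^\dag=0$; since the two summands have image in the orthogonal subspaces $\mathcal{H}_0$ and $\mathcal{H}_0^\perp$ respectively, each must vanish, forcing $TQ=0$ (using injectivity of $S_0$). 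Therefore $TV=TP_0V=S_0^\dag V=S_0^\dag H_1 U$ (since $S_0^\dag U=0$). A short computation using $\sum_\mu M_\mu^\dag P_0 M_\mu=I$, i.e.\ $R^*(P_0)=I$, shows $\sum_\mu B_\mu^\dag B_\mu=U^\dag L_0^\dag L_0\,U=2S_0^\dag H_1 U$, so $TV=\tfrac{1}{2}\sum_\mu B_\mu^\dag B_\mu$. The adjoint argument handles $-S_0\rho_s V^\dag$, and together they produce the anti-commutator $-\tfrac{1}{2}\{\sum_\mu B_\mu^\dag B_\mu,\rho_s\}$; combined with the jump term this completes the claimed Lindblad form. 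The main obstacle is the identity $TQ=0$; once it is in hand everything else reduces to routine bookkeeping based on the pseudo-inverse identity $L_0^\dag L_0\,U=2W$.
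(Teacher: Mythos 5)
Your proof is correct and follows essentially the same route as the paper: reduce \eqref{eq:Ls2} to $S_0^\dag R(\cdot)S_0$ applied to an explicit expression, use $R\circ\fL_0=0$ in the form $R(L_0 Z L_0^\dag)=\tfrac12 R\big(L_0^\dag L_0 Z + Z L_0^\dag L_0\big)$ to produce the jump term $\sum_\mu B_\mu\rho_s B_\mu^\dag$, and use $R^*(P_0)=I$ to identify $\sum_\mu B_\mu^\dag B_\mu$ with the anticommutator coefficient. The one substantive difference is your treatment of the pair $V\rho_s S_0^\dag + S_0\rho_s V^\dag$: you single out $TQ=0$ (with $T=\sum_\mu\bar\lambda_\mu S_0^\dag M_\mu$, $Q=I-P_0$) as the ``crucial step'' and prove it by injecting the decomposition $M_\mu=\lambda_\mu P_0+S_0N_\mu Q$ into $\sum_\mu M_\mu^\dag M_\mu=I$ and invoking an orthogonality argument. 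That is valid but roundabout: taking the adjoint of $M_\mu S_0=\lambda_\mu S_0$ gives $\bar\lambda_\mu S_0^\dag=S_0^\dag M_\mu^\dag$, hence $T=\sum_\mu S_0^\dag M_\mu^\dag M_\mu=S_0^\dag$ in one line, which is exactly the computation the paper uses (in the guise of $R(A\,K_0(\rho_s))=P_0A\,K_0(\rho_s)$, ``similar to \eqref{eq:Trick}'') both to show $R\big(K_1(\fL_{s,1}(\rho_s))\big)=0$ and to evaluate the surviving terms. Your bookkeeping via $U$, $W$, $V$ absorbs that separate cancellation into the grouping of sandwiches, but the two arguments are term-for-term equivalent, and all the identities you rely on ($L_0^\dag L_0\,(L_0^\dag L_0)^{-1}=I-P_0$, $P_0(L_0^\dag L_0)^{-1}=0$, $\sum_\mu N_\mu^\dag N_\mu=R^*(P_0)=I$) are the same ones the paper invokes.
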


\begin{proof}
 We have  $ R\Big(  K_{1}\left( \fL_{s,1}(\rho_s) \right)\Big) =0$. This results from ($\rkz$ stands for $S_0 \rho_s S_0^\dag = K_0(\rho_s)$)
 \begin{multline}
  K_{1}\left( \fL_{s,1}(\rho_s) \right)
  = -i[C_1,\, -i\,S_0[S_0^\dag H_1 S_0,\,\rho_s]S_0^\dag] \\
  = -[C_1, P_0H_1 \rkz - \rkz H_1 P_0]
   \\
   \label{eq:Trick2}
  = -2 (L_0^\dag L_0)^{-1} H_1 (P_0H_1\rkz- \rkz H_1P_0) \\
    +2 (P_0H_1\rkz- \rkz H_1P_0)  H_1 (L_0^\dag L_0)^{-1}
\end{multline}
where we have used Lemma \ref{lem:K1} and $P_0 K_0 = K_0$.

Repeating computations similar to \eqref{eq:Trick}, we see that for any operator $A$ on $\mathcal{H}$, $R(AP_0)=R(P_0A)=P_0 A P_0$. Since $P_0 \rkz= \rkz P_0=\rkz$ we moreover have $R(A\rkz)=P_0 A \rkz$ and $R(\rkz A)=\rkz A P_0$. This gives the result of applying $R$ on all the terms in \eqref{eq:Trick2}, and since $P_0(L_0^\dag L_0)^{-1} = (L_0^\dag L_0)^{-1}P_0=0$, we conclude that $R(K_1(\fL_{s,1}))= 0$.

Thus $\fL_{s,2}(\rho_s) = S_0^\dag  R\Big(\fL_{1}\left(  K_{1}(\rho_s) \right) \Big) S_0$. Exploiting similar  simplifications, we have
\begin{multline*}
  \fL_{1}\left(  K_{1}(\rho_s) \right) = -H_1(C_1\rkz-\rkz C_1) +(C_1\rkz-\rkz C_1) H_1
  \\
  = H_1 \rkz C_1 + C_1 \rkz H_1 - (H_1 C_1 \rkz+\rkz C_1 H_1)
  \\
  = 2H_1 \rkz H_1(L_0^\dag L_0)^{-1} + 2 (L_0^\dag L_0)^{-1} H_1 \rkz H_1
  \\ - 2 H_1 (L_0^\dag L_0)^{-1} H_1 \rkz -2  \rkz H_1 (L_0^\dag L_0)^{-1} H_1
\end{multline*}
and, using $S_0^\dag R(A\rkz)=S_0^\dag P_0 A \rkz = S_0^\dag A \rkz$ and the definition $\rkz = S_0 \rho_s s_0^\dag$, we get
\begin{multline*}
 \fL_{s,2}(\rho_s)
  =
  \\2 S_0^\dag R \Big( H_1 \rkz H_1(L_0^\dag L_0)^{-1} + (L_0^\dag L_0)^{-1} H_1 \rkz H_1\Big) S_0
  \\ - 2 S^\dag_0H_1 (L_0^\dag L_0)^{-1} H_1 S_0 \rho_s -2 \rho_s S^\dag_0H_1 (L_0^\dag L_0)^{-1} H_1 S_0
  .
\end{multline*}
Since for all $A$, $R(\fL_0(A))=0$, we have the identity
$$
R(L_0 A L_0^\dag) = R \big( \tfrac{1}{2} \big(L_0^\dag L_0 A + A L_0^\dag L_0\big) \big)
.
$$
With $A=(L_0^\dag L_0)^{-1}  H_1 \rkz H_1 (L_0^\dag L_0)^{-1}$ we get
\begin{multline*}
2 R\Big(L_0 (L_0^\dag L_0)^{-1}  H_1 \rkz H_1 (L_0^\dag L_0)^{-1} L_0^\dag\Big)  =
  \\
 R\Big( (I-P_0) H_1 \rkz H_1 (L_0^\dag L_0)^{-1} +  (L_0^\dag L_0)^{-1}  H_1 \rkz H_1 (I-P_0)\Big)
 \\
  = R\Big(  H_1 \rkz H_1 (L_0^\dag L_0)^{-1} +  (L_0^\dag L_0)^{-1}  H_1 \rkz H_1 \Big)
\end{multline*}
since $
R\Big( P_0 H_1 \rkz H_1 (L_0^\dag L_0)^{-1} \Big)=P_0 H_1 \rkz H_1 (L_0^\dag L_0)^{-1}P_0$
 and  $(L_0^\dag L_0)^{-1}P_0=0$.
Thus
\begin{multline*}
 \fL_{s,2}(\rho_s)
  =
  \\4 S_0^\dag  R\Big(L_0 (L_0^\dag L_0)^{-1}  H_1 \rkz H_1 (L_0^\dag L_0)^{-1} L_0^\dag\Big) S_0
  \\ - 2 S^\dag_0 H_1 (L_0^\dag L_0)^{-1} H_1 S_0 \rho_s -2 \rho_s S^\dag_0H_1 (L_0^\dag L_0)^{-1} H_1 S_0
  .
\end{multline*}
Using the decomposition~\eqref{eq:Rkraus} of $R$ we  have
$$
4 S_0^\dag  R\Big(L_0 (L_0^\dag L_0)^{-1}  H_1 \rkz H_1 (L_0^\dag L_0)^{-1} L_0^\dag\Big) S_0 \\
= \sum_\mu B_\mu \rho_s B_\mu^\dag \, .
$$
We conclude by the following computations:
\begin{multline*}
 \tfrac{1}{2}\sum_\mu B_\mu^\dag B_\mu=
 \\ 2 \sum_\mu  S_0^\dag H_1(L_0^\dag L_0)^{-1}L_0^\dag  M_\mu^\dag S_0 S_0^\dag M_\mu  L_0 (L_0^\dag L_0)^{-1} H_1 S_0
 \\
 = 2   S_0^\dag H_1(L_0^\dag L_0)^{-1}L_0^\dag  R^*(P_0)  L_0 (L_0^\dag L_0)^{-1} H_1 S_0
 \\
 = 2   S_0^\dag H_1(L_0^\dag L_0)^{-1}L_0^\dag   L_0 (L_0^\dag L_0)^{-1} H_1 S_0
 \\
 = 2   S_0^\dag H_1 (L_0^\dag L_0)^{-1} H_1 S_0
 .
\end{multline*}
\end{proof}

%%%  SECTION:  EXAMPLE   %%%

\section{Illustrative example: low-Q cavity coupled to another quantum system} \label{sec:example}

The developments above are rigorous in finite dimension, but they can be formally applied also on infinite- dimensional systems, as illsutrated in the following example.

We consider a strongly dissipative driven harmonic oscillator (low-Q cavity) coupled to another, undamped quantum system with the same transition frequency (``target'' system). Denote $\mathcal{H}_A$ (resp. $\mathcal{H}_B$) the infinite-dimensional Hilbert space of the strongly dissipative harmonic oscillator (resp. the target system), spanned by the Fock states $\{\ket{n_A}\}_{n\in \mathbb{N}}$ (resp.~a possibly infinite basis $\{\ket{n_B}\}_{n_B}$); $\rho$ is the density operator of the composite system, on $\mathcal{H}=\mathcal{H}_A \otimes \mathcal{H}_B$.

In the frame rotating at the common frequency of the two systems, their coupled evolution is described by the standard master differential equation:
\begin{align}\label{eq:ExDyn1}
\begin{aligned}
\frac{d}{dt}\rho = [u\tilde\ba^\dagger - u^*\tilde\ba, \rho] + \kappa \left(\tilde\ba\rho \tilde\ba^\dagger-\frac{1}{2}\left(\tilde\ba^\dagger \tilde\ba \rho + \rho \tilde\ba^\dagger \tilde\ba \right) \right) \\ - ig\left[\tilde\ba^\dagger \tilde\bb + \tilde\ba \tilde\bb^\dagger , \rho \right] \, .\\
\end{aligned}
\end{align}
Here $\tilde\ba=\ba \otimes I_B $ and $\tilde\bb = I_A \otimes \bb$ are the annihilation operators respectively for the harmonic oscillator $A$ and for the quantum system $B$ (possibly generalized if $B$ is not a harmonic oscillator; e.g.~if $B$ is a qubit, we have $\bb = \ket{g}\bra{e}$ the transition operator from excited to ground state). The first line describes the driven and damped evolution of harmonic oscillator A, while the second line describes the exchange of energy quanta between the two quantum systems.
The constants $(\kappa, g) \in \mathbb{R}^2$ govern the speed of these dynamics. We here consider $\kappa \gg g$, with the goal to adiabatically eliminate the fast dynamics of the low-Q cavity and compute its effect on the other quantum system. The dynamics \eqref{eq:ExDyn1} is then equivalent to
\begin{align}\label{eq:mast_eq_ex}
\frac{d}{dt}\rho = \mathfrak{L}_0(\rho) + \epsilon\mathfrak{L}_1(\rho)
\end{align}
with $L_0 = \sqrt{\kappa}(\tilde\ba - \alpha)$, $\alpha = 2u/\kappa$ and $\epsilon\mathfrak{L}_1(\rho)=- ig\left[\tilde\ba^\dagger \tilde\bb + \tilde\ba \tilde\bb^\dagger , \rho \right]$. For this typical example, the results of $\mathfrak{L}_{s,1}$ and $\mathfrak{L}_{s,2}$ are well known (see e.g. \cite[chap.12]{CarmichaelBook07}). Our results allow to readily retrieve their expression and thus completely circumvent the trouble of the usual calculation.

In the absence of coupling between the two subsystems ($\epsilon=0$), the overall system trivially converges towards $R(\rho_0) = \ket{\alpha}\bra{\alpha}_A \otimes \text{Tr}_A(\rho(0))$. Here $\text{Tr}_A$ is the partial trace over $\mathcal{H}_A$ and $\ket{\alpha}$ denotes the coherent state of amplitude $\alpha \in \mathbb{C}$, towards which a classically driven and damped harmonic oscillator is known to converge. Therefore we have $\mathcal{H}_0 = \ket{\alpha}\bra{\alpha} \otimes \mathcal{H}_B$, $P_0 = \ket{\alpha}\bra{\alpha} \otimes I_B$, and $M_\mu = \ket{\alpha}\bra{\mu_A} \otimes I_B$ with $\mu$ spanning $\mathbb{N}$. We will naturally describe $\rho_S$ on the Hilbert space $\mathcal{H}_s \equiv \mathcal{H}_B$ and with basis $\{\ket{n_s}\}_{n_s}$, so $S_0 = \sum_{n}\ket{\alpha}\ket{n_B}\bra{n_s}$.

For the first-order perturbation, using the property $\tilde\ba \ket{\alpha} = \alpha \ket{\alpha}$, Lemma \ref{lem:Ls1} readily yields
$$H_{s,1} = \alpha \bb_s^\dagger + \alpha^* \bb_s \; ,$$
denoting by $\boldsymbol{q}_s$ the operator on $\mathcal{H}_s$ equivalent to $\boldsymbol{q}$ on $\mathcal{H}_B$. This standard result shows that the oscillator $A$ can be approximated as a classical field of amplitude $\alpha$. Indeed, $H_{s,1}$ describes e.g.~Rabi oscillations for a qubit driven by a classical field ($\mathcal{H}_B=\text{span}\{\ket{g},\ket{e}\}$); or, when $\mathcal{H}_B$ describes another harmonic oscillator, $H_{s,1}$ is the same Hamiltonian in fact as in the first line of \eqref{eq:ExDyn1}, with classical drive amplitude $i u$ replaced by $\alpha$.

Next, using $\Dp$ the unitary displacement operator on $\mathcal{H}_A$, which satisfies $\Dp \ba \Dm = \ba-\alpha I$, we compute $(L_0^\dag L_0)^{-1} = \Dp \bN_A^{-1} \Dm/\kappa$, where $\bN_A = \ba^\dag \ba = \sum_{n \in \mathbb{N}} \; n \, \ket{n_A}\bra{n_A}$ and the Moore-Penrose pseudo-inverse of $\bN_A$ is just $\bN_A^{-1}=\sum_{n \geq 1} \; \frac{1}{n} \, \ket{n_A}\bra{n_A}$. We then compute
%AS: couper des trucs ici?
\begin{multline*}
C_1 = \frac{2}{\kappa}\, \Dp \bN_A^{-1} \Dm (\tilde\ba^\dag \tilde\bb + \tilde\ba \tilde\bb^\dag) \ket{\alpha}\bra{\alpha}\otimes I_B \;\; + h.c. \\
= \frac{2}{\kappa}\, \Dp \bN_A^{-1} ((\tilde\ba^\dag+\alpha^* I) \tilde\bb + (\tilde\ba+\alpha I) \tilde\bb^\dag) \Dm \ket{\alpha}\bra{\alpha}\otimes I_B \;\; + h.c. \\
= \frac{2}{\kappa}\, \Dp \bN_A^{-1} ((\tilde\ba^\dag+\alpha^* I) \tilde\bb + (\tilde\ba+\alpha I) \tilde\bb^\dag) \ket{0}\bra{\alpha}\otimes I_B \;\; + h.c. \\
= \frac{2}{\kappa}\, \Dp \bN_A^{-1} ((\alpha^*\tilde\bb + \alpha \tilde\bb^\dag)\ket{0} + \tilde\bb \ket{1})\bra{\alpha} \;\; + h.c. \\
= \frac{2}{\kappa}\, \Dp \ket{1}\bra{\alpha}\otimes \bb \;\; + h.c. \, .
\end{multline*}
From Lemma \ref{lem:K1}, we see that a pure state $\ket{\psi_S} \in \mathcal{H}_s$ gets mapped at order zero to $\ket{\alpha} \otimes \ket{\psi_B}$ with $\ket{\psi_B} \equiv \ket{\psi_S}$, but at order one to a slightly rotated state $\ket{\alpha} \otimes \ket{\psi_B} - \frac{i g}{\kappa} (\Dp \ket{1}) \otimes (\bb \ket{\psi_B})$. This expresses that the coupled low-Q cavity $A$ contains slightly more energy than a coherent state, to the detriment of system B.

For the second order perturbation, from Lemma \ref{lem:Fs2} we must compute $B_\mu= 2 S_0^\dag M_\mu  L_0 (L_0^\dag L_0)^{-1} H_1 S_0$. The computations made for $C_1$ above can be used, writing:
\begin{multline*}
B_\mu = S_0^\dag M_\mu  L_0 \; \big( \frac{2}{\kappa}\, \Dp \ket{1}\bra{\alpha}\otimes \bb \big) S_0 \\
= \frac{2}{\sqrt{\kappa}} S_0^\dag M_\mu \big( \Dp \ba \ket{1}\bra{\alpha}\otimes \bb \big) \, S_0 \\
= \frac{2}{\sqrt{\kappa}} \sum_{n,m} \ket{n_s}\bra{n_B} \bra{\mu_A} \; \ket{\alpha}\bra{\alpha}\otimes \bb \; \ket{m_B}\bra{m_S}\\
= \frac{2}{\sqrt{\kappa}}\; \braket{\mu_A}{\alpha}\bb_s \, .
\end{multline*}
All the obtained $B_\mu$ are in fact identical up to a scalar factor, so they may be combined into a single operator:
\begin{multline*}
\epsilon^2  \fL_{s,2}(\rho_s)=g^2 \sum_\mu B_\mu \rho_s  B_\mu^\dag - \tfrac{1}{2} \big(B_\mu^\dag B_\mu \rho_s +\rho_s B_\mu^\dag B_\mu  \big) \\
= \frac{4g^2}{\kappa} \sum_\mu |\braket{\mu_A}{\alpha}|^2 \left( \bb_s\rho_s \bb_s^\dagger - \frac{1}{2}\left(\bb_s^\dag \bb_s \rho_s + \rho_s \bb_s^\dagger \bb_s \right) \right) \\
= \frac{4g^2}{\kappa}\left( b\rho_s b^\dagger - \frac{1}{2}\left(b^\dagger b \rho_s + \rho_s b^\dagger b \right)\right) \, .
\end{multline*}
%AS: take out?
(Note that $\{ |\braket{\mu_A}{\alpha}|^2 \}_{\mu \in \mathbb{N}}$ just corresponds to the expansion of the coherent state $\ket{\alpha}$, of unit norm, in the Fock basis.)
We thus get the expected reduced dynamics:
\begin{multline*}
\dotex \rho_s = -ig\left[\alpha \bb_s^\dag + \alpha^*\bb_s, \rho_s \right] \\+ \frac{4g^2}{\kappa}\left( \bb_s\rho_s \bb_s^\dagger - \frac{1}{2}\left(\bb_s^\dagger \bb_s \rho_s + \rho_s \bb_s^\dagger \bb_s \right)\right) \; ,
\end{multline*}
which expresses that the B system is subject to slow damping due to the presence of the low-Q cavity.

\paragraph*{Remark} Note that if the slow dynamics includes a Hamiltonian that acts only on the $B$ system, i.e.~of the form $\tilde\bH_B=I_A \otimes \bH_B$ (acting only on B), then $C_1$ features an additional term
\begin{multline*}
\frac{2}{\kappa}\, \Dp \bN_A^{-1} \Dm \big(I_A \otimes \bH_B\big) \ket{\alpha}\bra{\alpha}\otimes I_B \;\; + h.c.\\
= \frac{2}{\kappa}\, \big(\Dp \bN_A^{-1} \Dm \ket{\alpha}\bra{\alpha} \big) \otimes \bH_B \\
= \frac{2}{\kappa}\, \big(\Dp \bN_A^{-1} \ket{0}\bra{\alpha} \big) \otimes \bH_B = 0 \, .
\end{multline*}
Thus the second-order correction vanishes and the Zeno dynamics is the only addition up to second order:
\begin{multline*}
\dotex \rho_s = -ig\left[\alpha \bb_s^\dag + \alpha^*\bb_s + \bH_B, \rho_s \right] \\+ \frac{4g^2}{\kappa}\left( \bb_s\rho_s \bb_s^\dagger - \frac{1}{2}\left(\bb_s^\dagger \bb_s \rho_s + \rho_s \bb_s^\dagger \bb_s \right)\right) \; .
\end{multline*}

\section{Conclusion}

We have shown how to eliminate the fast dynamics in an open quantum system (Lindblad equation) with two timescales, and obtain the resulting reduced dynamics explicitly in Lindblad form. This is important to guarantee that the approximate model preserves the structure of quantum states (positivity and  trace). The slow system is hence parameterized explicitly with a quantum state on a lower-dimensional Hilbert space, and mapped to the complete Hilbert space by a completely positive trace preserving map (Kraus map). We have illustrated on a benchmark system (highly dissipating quantum oscillator resonantly coupled to another quantum system) how our explicit formulae directly retrieve the results previously obtained with lengthy ad hoc computations.

We have obtained explicit formulae for the second-order corrections only in the particular case of a fast Lindbladian with single-channel damping $L_0$, and a slow ``perturbation'' in Hamiltonian form. Conceptually there should be no obstacle to extending this theory to any Lindbladians, the key point being an appropriate way to generalize the pseudo-inversion $(L_0^\dag L_0)^{-1}$. However, the special case completed here will already allow to answer currently open questions about the influence of small Hamiltonian perturbations on stable open quantum systems built e.g.~with engineered reservoirs \cite{MirrahimiCatComp2014,LeghtTPKVPSNSHRFSMD2015S}.

\section*{Acknowledgement}
The authors thank Benjamin Huard, Zaki Leghtas and Mazyar Mirrahimi for many useful discussions.

\bibliographystyle{plain}
%\bibliography{C:/Users/Remi/Dropbox/These/bib}
%\bibliography{RouchonJabRef}
%\bibliography{E:/Latex/RouchonJabRef}

\end{document}